\scriptsize\color{Blue},  % the style that is used for the line-numbers
\definecolor{blueb}{RGB}{79,112,190}
\def\comp{\raise 1pt \hbox{$\scriptstyle\circ$}}
\def\upto{{\raise 1pt \hbox{$\scriptstyle \,\nearrow\,$}}}
\def\downto{{\raise 1pt \hbox{$\scriptstyle \,\searrow\,$}}}
\newtheorem{proposition}{Proposition}
\newtheorem{assumption}{Assumption}
\newtheorem{paradox}{Paradox}
\title{\Large\textbf{Paradoxes of the public sector productivity measurement}
% \thanks{}
}
\author[1]{Timo Kuosmanen}
\author[2,\footnote{
Corresponding author. \newline \hspace*{5mm} 
\textit{E-mail addresses:} \texttt{timo.kuosmanen@utu.fi (T. Kuosmanen)}, \texttt{x.zhou@surrey.ac.uk (X. Zhou)}.}]{Xun Zhou}
\affil[1~]{Turku School of Economics, University of Turku, 20500 Turku, Finland}
\affil[2~]{{Surrey Business School, University of Surrey, Guildford GU2 7XH, UK}}
\date{September 2025}
\begin{document}
% revising the caption Figure XX: to Fig. XX.; Table XX: to Table XX.
% \captionsetup[figure]{labelfont={bf},labelformat={default},labelsep=period,name={Figure},font=small}
% \captionsetup[table]{labelfont={bf},labelformat={default},labelsep=period,name={Table},font=small}

\maketitle

\vfill
\begin{center}
Declarations of interest: none
\end{center}
\vfill

\begin{abstract}
\noindent 
This paper critically investigates standard total factor productivity (TFP) measurement in the public sector, where output information is often incomplete or distorted. The analysis reveals fundamental paradoxes under three common output measurement conventions. When cost-based value added is used as the aggregate output, measured TFP may paradoxically decline as a result of genuine productivity-enhancing changes such as technical progress and improved allocative and scale efficiencies, as well as reductions in real input prices. We show that the same problems carry over to the situation where the aggregate output is constructed as the cost-share weighted index of outputs. In the case of distorted output prices, measured TFP may move independently of any productivity changes and instead reflect shifts in pricing mechanisms. Using empirical illustrations from the United Kingdom and Finland, we demonstrate that such distortions are not merely theoretical but are embedded in widely used public productivity statistics. We argue that public sector TFP measurement requires a shift away from cost-based aggregation of outputs and toward non-market valuation methods grounded in economic theory.
\\[5mm]
\textbf{Keywords}: Baumol's cost disease; Growth accounting; Non-market output; Output measures; Production function; TFP distortions
\\[2mm]
\textbf{JEL Codes}: D24; H83; O4; Q51
\end{abstract}
\vfill

\thispagestyle{empty}
%%
%%%%%%%%%%%%%%%%%%%%%%%%%%%%%%%%%%%%%%%%%%%%%%%%%%%%%%%%%%%%%%%%%%%%%%%%%%%%%%%%
%%
\newpage
\setcounter{page}{1}
\setcounter{footnote}{0}
\pagenumbering{arabic}
\baselineskip 20pt
\setlength\bibitemsep{1.15\itemsep}

%%%%%%%%%%%%%%%%%%%%%%%%%%%%%%%%%%%%%%%%%%%%%%%%
%%
%%
%%
%%%%%%%%%%%%%%%%%%%%%%%%%%%%%%%%%%%%%%%%%%%%%%%%
\section{Introduction}\label{sec:intro}
In competitive market sectors, productivity measurement is relatively straightforward, as a large variety of outputs can typically be valued (or weighted) using market prices, which reflect both demand-side valuations and supply-side costs. In the public sector, however, productivity measurement is challenging due to the absence of competitive pricing mechanisms, as market prices are often unavailable, distorted, or heavily regulated. The problem is more fundamental than merely a data availability issue: how to construct a meaningful measure of aggregate output when no reliable market-based valuation exists. In fact, this fundamental issue is reflected in the system of national accounts (SNA), which defaults to measuring public sector value added based on the cost of its inputs \parencite{united_nations_system_2009}. 

A useful starting point for discussing this measurement challenge is the framework provided by \textcite{diewert_productivity_2018}, which classifies the measures of public sector value added into three hierarchical categories based on the availability of quantity and price information. The first-best approach applies when both output quantities and market prices are available, assuming that market prices or consumer valuations provide a reliable aggregation of value added. The second-best approach applies when output quantities are observable but market prices are not, in which case unit production cost is used as a weight for each specific output to construct the value added. The third-best approach applies when neither output quantities nor market prices are available: output growth is proxied by real input growth, and output price changes are imputed using an index of input prices.
These approaches, which \textcite{diewert_productivity_2018} analyzes using index number methods, are also formulated differently: the first- and second-best are defined in terms of levels, whereas the third-best is defined in terms of changes.

In this paper, we focus on total factor productivity (TFP) measurement that relies on an aggregate production function in settings where output prices are unobservable or unreliable, or where both output prices and quantities are unavailable, as is often the case in public services. The choice of a production function framework is motivated by its generality. While it is an explicit assumption in econometric estimation (\cite{Olley1996}), the production function also serves as the implicit foundation for growth accounting \parencite{solow_technical_1957} and frontier-based approaches \parencite{Fare1994,Chambers1996productivity,dai_can_2025}.

The contribution of this paper is to formally demonstrate that standard approaches to TFP measurement can introduce systematic biases when applied in the absence of a reliable measure for aggregate output. Where aggregate output is constructed as cost-based value added or the cost-share weighted index of outputs, measured TFP may paradoxically fall as a direct result of genuine improvements such as technical progress, improved allocative efficiency, or improved scale efficiency. Changes in real input prices, regardless of whether they reflect underlying productivity changes, can similarly distort measured TFP. The resulting distortions can be so severe as to render the measurement exercise more misleading than informative. Moreover, when output prices are observable but distorted by regulation, measured TFP may merely reflect changes in regulatory pricing rather than changes in productivity.

This paper argues for a fundamental shift away from cost-based measurement of aggregate output in productivity analysis. The widespread yet often overlooked reliance on competitive market assumptions in TFP measurement has led to systematic biases, particularly in the context of the public sector. In settings where market valuations are unavailable or unreliable, we suggest using non-market valuation methods such as stated or revealed preferences to value and aggregate outputs more meaningfully.

The rest of the paper is organized as follows. Section \ref{sec:illustration} presents a motivating example. Section \ref{sec:setup} introduces the general theoretical framework for standard productivity analysis. Section \ref{sec:third-best} examines TFP measurement with cost-based value added and shows how this can produce counterintuitive TFP results. Section \ref{sec:third-best-2} extends the critique to TFP measurement with cost-weighted aggregate output, demonstrating that the same fundamental flaws carry over to this setting. Section \ref{sec:second-best} demonstrates that the measurement paradox can also arise from distorted output prices. Section \ref{sec:concl} concludes with implications for productivity analysis and directions for future research.

%%%%%%%%%%%%%%%%%%%%%%%%%%%%%%%%%%%%%%%%%%%%%%%%
%%
%%
%%
%%%%%%%%%%%%%%%%%%%%%%%%%%%%%%%%%%%%%%%%%%%%%%%%
\section{A motivating example}\label{sec:illustration}
Before turning to the formal analysis, it is useful to consider a real-world example that illustrates the relevance of TFP measurement issues in the public sector. Figure \ref{fig:tfp-index} plots the TFP indices for education services and for human health and social work activities in the United Kingdom and Finland from 1995 to 2020, based on data from the EUKLEMS \& INTANProd database \parencite{bontadini_filippo_euklems_2023}.\footnote{Source: \url{https://euklems-intanprod-llee.luiss.it/}.} These indices are constructed using standard growth accounting methods that rely on cost-based value added as the output measure.

\begin{figure}[H]
        \centering
        \begin{subfigure}[b]{0.75\textwidth}
            \centering
            \includegraphics[width=\textwidth]{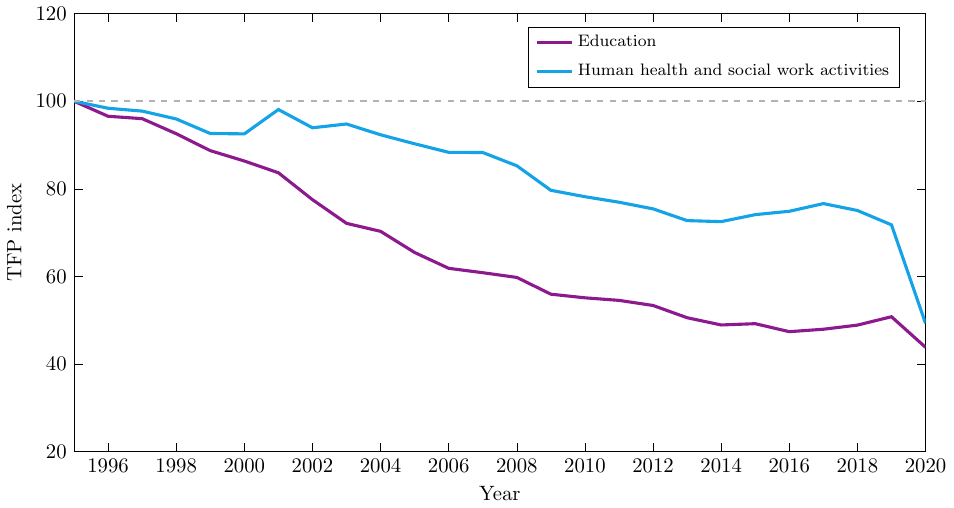}
                    \vspace{-1.2em}
            \caption[]%
            {{\small UK}}  
            \label{fig:uk}
        \end{subfigure}
        \hfill
        \begin{subfigure}[b]{0.75\textwidth}  
            \centering 
            \includegraphics[width=\textwidth]{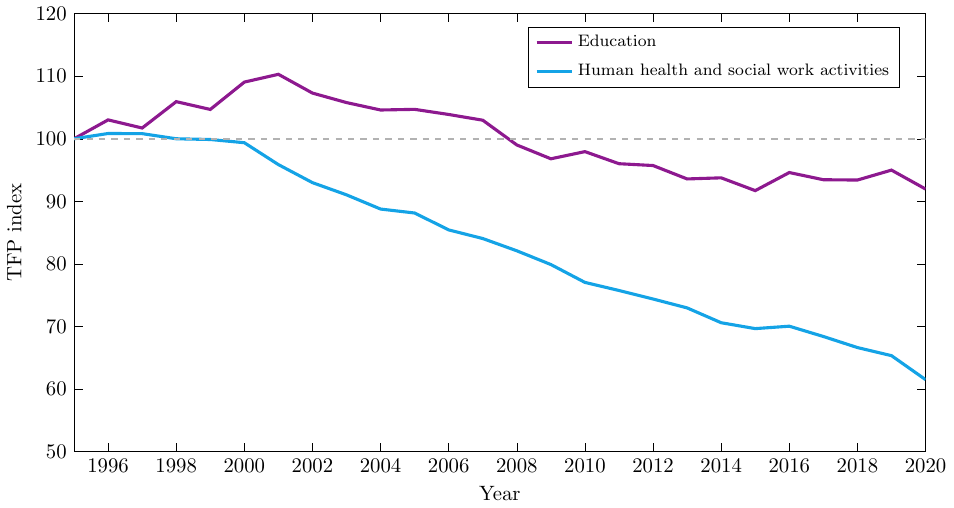}
                    \vspace{-1.2em}
            \caption[]%
            {{\small Finland}}    
            \label{fig:fi}
        \end{subfigure}
        \vspace{-0.2em}
        \caption[]
        {\small TFP indices for key public services in the UK and Finland; 1995--2020 (1995 = 100).} 
        \label{fig:tfp-index}
\end{figure}

In both countries, the measured TFP in education services and human health and social work activities has declined substantially over the past two decades. Note that Figure \ref{fig:tfp-index} depicts the level of TFP, not the growth rate. This observed decline of the standard growth accounting TFP does not conform with \textcite{diewert_productivity_2018}, who argues that TFP should remain constant in the third-best approach. The decreasing TFP presents an exceptionally severe case of Baumol's cost disease \parencite{baumol_performing_1966}, a theory predicting that productivity in the public sector tends not to increase. In the next sections, we will argue that the standard growth accounting TFP measures for public services such as education and healthcare presented in Figure \ref{fig:tfp-index} can be seriously misleading.

%%%%%%%%%%%%%%%%%%%%%%%%%%%%%%%%%%%%%%%%%%%%%%%%
%%
%%
%%
%%%%%%%%%%%%%%%%%%%%%%%%%%%%%%%%%%%%%%%%%%%%%%%%
\section{Production theory}\label{sec:setup}
To formally examine the paradoxes in public sector TFP measurement, we begin with a standard neoclassical production model:
\begin{equation}\label{eq:solow}
    Y=A\cdot f(K,L),
\end{equation}
where $Y$ is the aggregate output measured by value added, $K$ is the capital input (the capital stock or the flow cost of capital services), $L$ is the labor input (number of employees or hours worked), $f$ is a production function, and $A$ is the Solow residual representing TFP \parencite{solow_technical_1957}. The model can be extended to include the gross output, intermediate inputs, or undesirable outputs (such as pollution) without altering the fundamental logic of our analysis. We do not restrict $f$ to any particular form. Note that possible estimation errors are beyond the scope of this paper.

Following \textcite{solow_technical_1957}, we can measure the TFP level as:
\begin{equation}\label{eq:TFP-solow}
    A = \frac{Y}{f(K, L)}.
\end{equation}
The TFP level captures the portion of aggregate output not attributed to input levels and serves as the central object of measurement in this paper. This is a standard approach to measuring TFP in competitive market sectors. It is worth noting that while economists are often interested in the growth rate of TFP, this paper focuses on the TFP level for analytical convenience.

Measurement of the aggregate output $Y$ is problematic in many public services, where output prices may be unavailable or unreliable, and in some cases, even output quantities are difficult to observe. We distinguish four categories ordered by increasing informational availability: 
\begin{enumerate}
    \item\textit{Cost-based measures of value added:} This category applies when neither output quantities nor prices are available, a situation common in public services where outputs are multidimensional, poorly defined, or difficult to quantify. In such cases, the analysis follows the convention of the SNA (income approach), where value added is constructed based on the cost of the inputs used by the public sector: 
    \begin{equation}\label{eq:third-best}
        \hat{Y} = rK + wL,
    \end{equation}
    where $r$ and $w$ are the real prices of capital and labor inputs, respectively. We denote the measured value added as $\hat{Y}$ to distinguish it from the true value added $Y$ defined in Equation \eqref{eq:solow}. 
    
    \item\textit{Cost-weighted aggregate output:} A related but distinct approach is used in cases where output quantities are measurable but market prices are unavailable. This is common in many public services due to the absence of market-based valuations. Examples of outputs include patients treated in healthcare, courses taught in education, cases resolved in law enforcement, and improved air or water quality in environmental protection. In this scenario, a measure of aggregate output, denoted by $\tilde{Y}$, is constructed as a cost-weighted output index:
    \begin{equation}\label{eq:second-best}
        \tilde{Y} = \sum_{i} c_i y_i,
    \end{equation}
    where $c_i$ represents the unit production cost assigned to each output $y_i$ in real terms. Although this approach uses output data, its valuation remains entirely derived from the input side.

    \item\textit{Value added based on distorted market prices:} The third category applies when output quantities and prices are observable, but the observed prices are distorted by regulation, subsidies, or other frictions. In these settings, the measured value added is directly affected by such distorted prices.

    \item\textit{Market-based value added:} The final category is the ideal scenario where both output quantities and their undistorted market prices are available. Here, value added can be reliably measured under conditions of perfect competition. As this rarely applies to the public sector and involves no theoretical bias, we do not further consider this case.
\end{enumerate}

Our classification builds on the conceptual typology in \textcite{diewert_productivity_2018} but differs in its structure. More importantly, our approach distinguishes itself from Diewert's in two key respects. First, we ground the analysis in a standard production function framework, rather than relying on index number methods. Second, we formulate the analysis entirely in terms of levels rather than growth rates. These distinctions allow us to formally examine how different constructions of aggregate output affect standard TFP measurement.

To operationalize our analysis, we draw on the efficiency concepts developed by \textcite{Farrell1957} and further extended by \textcite{Fare1994}. Specifically, we focus on three sources of productivity growth: improvements in technology, allocative efficiency, and scale efficiency. 

When technology improves, the production function $f$ shifts upward:
\begin{equation}\label{eq:tech-progress}
    f'(K, L) > f(K, L),\quad \forall K, L,
\end{equation}
where $f'$ denotes the post-shift technology. This upward shift (i.e., technical progress) raises the maximum attainable output for given inputs and hence contributes to higher TFP.

Allocative efficiency refers to whether inputs are allocated in a cost-minimizing way, given their relative prices. Improvements in allocative efficiency imply reallocation of inputs such that the maximum output can be obtained with reduced total input cost, thereby enabling a higher level of TFP. An input mix $(K, L)$ is allocatively efficient if the marginal rate of technical substitution equals the real input price ratio:
\begin{equation}\label{eq:allocative-eff}
    \frac{\partial f / \partial K}{\partial f / \partial L} = \frac{r}{w}.
\end{equation}

Scale efficiency concerns whether production occurs at the most productive scale size (MPSS). Under variable returns to scale, improving scale efficiency involves adjusting the scale of inputs toward the MPSS. For example, under increasing returns to scale, inputs can be scaled up by a factor $\lambda > 1$, leading to more than proportional growth in the maximum attainable output and a potential increase in TFP:
\begin{equation}\label{eq:scale-eff}
    f(\lambda K, \lambda L) > \lambda f(K, L).
\end{equation}
Under decreasing returns to scale, the opposite applies: improving scale efficiency requires a contraction of inputs, resulting in a less than proportional decrease in the maximum attainable output and a potential increase in TFP.

%%%%%%%%%%%%%%%%%%%%%%%%%%%%%%%%%%%%%%%%%%%%%%%%
%%
%%
%%
%%%%%%%%%%%%%%%%%%%%%%%%%%%%%%%%%%%%%%%%%%%%%%%%
\section{TFP measurement with cost-based value added}\label{sec:third-best} 
In the case of no output quantities or prices, as discussed in Section \ref{sec:setup}, the value added of public service production is measured based on the total input cost as in Equation \eqref{eq:third-best}. The corresponding TFP level, $\hat{A}$, is measured as:
\begin{equation}\label{eq:TFP-no-info}
    \hat{A} = \frac{\hat{Y}}{f(K, L)} = \frac{rK + wL}{f(K, L)}.
\end{equation}

The fundamental problem with the TFP measurement defined in Equation \eqref{eq:TFP-no-info} is its complete lack of connection to outputs, as both the numerator (total input cost) and the denominator (the production function) are functions of the same inputs, $K$ and $L$. 
This construction, which depends entirely on input data, makes it straightforward to show the following counterintuitive findings.

\begin{paradox}\label{theo:paradox1}
Under constant input quantities and real input prices, technical progress reduces the TFP measured according to Equation \eqref{eq:TFP-no-info}.
\end{paradox}

\begin{proof}
Technical progress is represented by an upward shift of the production function as in Equation \eqref{eq:tech-progress}. Since $r, w, K, L$ are assumed to remain unchanged, the measured TFP in Equation \eqref{eq:TFP-no-info} becomes:
\begin{equation}
    \hat{A}' = \frac{rK + wL}{f'(K,L)} < \frac{rK + wL}{f(K,L)} = \hat{A}.
\end{equation}
\end{proof}

Paradox \ref{theo:paradox1} may arise in a variety of public services. For instance, a public clinic may adopt a free artificial intelligence (AI)-based triage chatbot to help screen patient inquiries \parencite{jindal_ensuring_2024}, allowing staff to process more cases per day without increasing labor or capital inputs. Similarly, teachers in public schools may begin using free generative AI tools to automate feedback on student assignments \parencite{chiu_impact_2024}, thereby raising instructional output without altering staffing levels or institutional costs. In these situations, the TFP measured according to Equation \eqref{eq:TFP-no-info} may paradoxically decline when technical progress fosters genuine improvements in productivity.

\begin{paradox}\label{theo:paradox2}
Under constant technology and real input prices, improvements in allocative efficiency reduce the TFP measured according to Equation \eqref{eq:TFP-no-info}.
\end{paradox}

\begin{proof}
Consider an initial input mix $(K, L)$ that can obtain the maximum output $f(K, L)$ at cost $C = rK + wL$. Improving allocative efficiency implies reallocation of inputs to a new mix $(K', L')$ that can produce the same maximum attainable output at a lower cost. In other words, we have: 
\begin{equation}
f(K', L') = f(K, L), \quad C' = rK' + wL' < C.
\end{equation}
Thus, the measured TFP in Equation \eqref{eq:TFP-no-info} becomes:
\begin{equation}
\hat{A}' = \frac{C'}{f(K', L')} = \frac{C'}{f(K, L)} < \frac{C}{f(K, L)} = \hat{A}.
\end{equation}
\end{proof}

Paradox \ref{theo:paradox2} may be observed in practice when a public organization reallocates inputs between capital and labor to reduce cost for a given output level. For example, a hospital may shift administrative work from salaried clerical staff (labor) to an existing self-service kiosk system (capital) without changing the scale or quality of services delivered. In education, schools may reduce reliance on teaching assistants (labor) by increasing scheduled use of existing instructional software or resources (capital), while keeping the class hours fixed. Similar reallocations can occur within input categories. In a hospital, administrative duties may be shifted from higher-paid doctors to lower-paid assistants; in education, routine tasks may be delegated from senior to junior teaching staff. In these situations, the TFP measured according to Equation \eqref{eq:TFP-no-info} may paradoxically decline when improvements in allocative efficiency allow for productivity growth.

\begin{paradox}\label{theo:paradox3}
Under constant technology and real input prices, improvements in scale efficiency reduce the TFP measured according to Equation \eqref{eq:TFP-no-info}.
\end{paradox}

\begin{proof}
Consider an initial input mix $(K, L)$ that can obtain the maximum output $f(K, L)$ at cost $C = rK + wL$. To improve scale efficiency under increasing returns to scale, inputs are scaled up by a factor $\lambda > 1$ toward the MPSS. In this case, the maximum attainable output increases more than proportionally, as in Equation \eqref{eq:scale-eff}, while the new cost is: 
\begin{equation}
    C' = \lambda(rK + wL).
\end{equation}
Thus, the measured TFP in Equation \eqref{eq:TFP-no-info} becomes:
\begin{equation}
\hat{A}' = \frac{\lambda(rK + wL)}{f(\lambda K, \lambda L)} < \frac{\lambda(rK + wL)}{\lambda f(K, L)} = \frac{rK + wL}{f(K, L)} = \hat{A}.
\end{equation}

Analogously, to improve scale efficiency under decreasing returns to scale, inputs are scaled down by a factor $\delta < 1$ toward the MPSS. In this case, the maximum attainable output decreases less than proportionally: 
\begin{equation}
    f(\delta K, \delta L) > \delta f(K, L), \quad C' = \delta(rK + wL).
\end{equation}
Thus, the measured TFP in Equation \eqref{eq:TFP-no-info} becomes:
\begin{equation}
\hat{A}' = \frac{\delta(rK + wL)}{f(\delta K, \delta L)} < \frac{\delta(rK + wL)}{\delta f(K, L)} = \frac{rK + wL}{f(K, L)} = \hat{A}.
\end{equation}
\end{proof}

Paradox \ref{theo:paradox3} may arise when public services adjust input scale toward the MPSS without changing technology or real input prices. For example, a public transit agency may increase route frequency by proportionally expanding drivers and vehicles, or a public school system may respond to growing enrollment by hiring additional teachers and opening more classes. If the sector operates under increasing returns to scale, the maximum attainable output increases more than proportionally, yet the TFP measured according to Equation \eqref{eq:TFP-no-info} may paradoxically decline despite the potential productivity growth.

\begin{paradox}\label{theo:paradox4}
Under constant technology and input quantities, lower real input prices reduce the TFP measured according to Equation \eqref{eq:TFP-no-info}.
\end{paradox}

\begin{proof}
Suppose real input prices decrease with $r' < r$ and $w' < w$, while input quantities and the production function remain fixed. The measured TFP in Equation \eqref{eq:TFP-no-info} becomes:
\begin{equation}
    \hat{A}' = \frac{r'K + w'L}{f(K,L)} < \frac{rK + wL}{f(K,L)} = \hat{A}.
\end{equation}
\end{proof}

Real-world examples of Paradox \ref{theo:paradox4} include cases where public sector entities reduce unit input prices without changing input quantities or technology. For example, a municipality may renegotiate procurement contracts for utilities or medical supplies at lower prices, or a school system may adopt centralized purchasing to reduce the unit costs of materials. Alternatively, public service wages may decline as a result of collective bargaining or public sector wage policy reforms. Although these changes lower real input costs, they do not reflect any underlying change in productivity. Nevertheless, the TFP measured according to Equation \eqref{eq:TFP-no-info} may paradoxically decline simply because lower costs are interpreted as a reduction in the measured value added.

Overall, these four paradoxes reveal fundamental flaws in standard TFP measurement under the cost-based approach, as is often the case in the public sector. Because the measure reflects only input costs rather than actual productivity, it systematically misinterprets genuine improvements (such as technical progress and improvements in allocative efficiency and scale efficiency) as declines in the measured TFP. Moreover, reductions in real input prices, regardless of whether they reflect underlying productivity changes, can produce similar measurement distortions. Therefore, any application of this approach may need to be interpreted with ``reverse psychology": a decline in the measured TFP could in fact signal either an underlying improvement in productivity or simply a reduction in real input prices. Fundamentally, this cost-based approach assigns value to outputs based on input requirements, a logic closely aligned with the labor theory of value, which is central to the Marxist approach to economic valuation.

%%%%%%%%%%%%%%%%%%%%%%%%%%%%%%%%%%%%%%%%%%%%%%%%
%%
%%
%%
%%%%%%%%%%%%%%%%%%%%%%%%%%%%%%%%%%%%%%%%%%%%%%%%
\section{TFP measurement with cost-weighted aggregate output}\label{sec:third-best-2} 

One might assume the paradoxes presented in Section \ref{sec:third-best} only arise in the extreme case of no output information. However, the same fundamental flaws persist even when output quantities are observable, as long as the valuation of these outputs remains derived from input costs. This is precisely the case in many public services, where output quantities are observable but output prices are unavailable or effectively zero. Examples include public library services (measured by book loans or visitor numbers), maintenance of public parks (measured by maintained area), general policing or fire services (measured by incident response or coverage), and the core outputs of public universities such as the number of courses taught, degrees awarded, or research articles published. In this case, the aggregate output is commonly constructed as a unit production cost-weighted output index as in Equation \eqref{eq:second-best}. 

To formalize the key result of this approach, we first state a critical assumption regarding cost allocation.

\begin{assumption}\label{theo:assump1}
    The total real input cost, $rK + wL$, can be exhaustively and proportionally attributed to a set of observed output indicators $y_i$ according to cost shares $\alpha_i > 0$ where $\sum_i \alpha_i = 1$.
\end{assumption}

This assumption, which is standard in the cost-weighting approach, leads to the following proposition.

\begin{proposition}\label{theo:propo1}
    Under Assumption \ref{theo:assump1}, the cost-weighted aggregate output index $\tilde{Y}$ is identical to the total real input cost.
\end{proposition}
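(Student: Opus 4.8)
The plan is to unpack the two definitions involved and show they coincide by direct substitution. The cost-weighted aggregate output index is defined in Equation \eqref{eq:second-best} as $\tilde{Y} = \sum_i c_i y_i$, where $c_i$ is the unit production cost assigned to output $y_i$. Assumption \ref{theo:assump1} tells us that the total real input cost $rK + wL$ is split across the outputs according to cost shares $\alpha_i$, so the portion of total cost attributed to output $i$ is $\alpha_i (rK + wL)$. The key observation is that the unit cost $c_i$ is, by construction, exactly this attributed cost divided by the corresponding output quantity, i.e.\ $c_i = \alpha_i (rK + wL)/y_i$.

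First I would state this identification of $c_i$ explicitly, grounding it in Assumption \ref{theo:assump1}: the phrase ``exhaustively and proportionally attributed'' means precisely that the cost allocated to output $i$ is $\alpha_i(rK+wL)$, and since $c_i$ is the \emph{unit} cost, we divide by $y_i$. Second, I would substitute this expression for $c_i$ into the definition of $\tilde{Y}$:
\begin{equation}
\tilde{Y} = \sum_i c_i y_i = \sum_i \frac{\alpha_i (rK + wL)}{y_i}\, y_i = \sum_i \alpha_i (rK + wL).
\end{equation}
The $y_i$ factors cancel termwise. Third, I would factor out the common quantity $rK+wL$, which does not depend on the summation index, and invoke the normalization $\sum_i \alpha_i = 1$ from Assumption \ref{theo:assump1}:
\begin{equation}
\tilde{Y} = (rK + wL) \sum_i \alpha_i = rK + wL.
\end{equation}
This establishes the claimed identity $\tilde{Y} = rK + wL$.

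The proof is essentially a one-line computation once the correct reading of the assumption is in place, so the main obstacle is not algebraic but definitional: the result hinges entirely on recognizing that ``proportional and exhaustive'' cost attribution forces $c_i = \alpha_i(rK+wL)/y_i$. I would make sure to articulate why this is the only sensible interpretation of the assumption, since a reader might otherwise view $c_i$ as an independently given datum rather than something pinned down by the cost shares. The exhaustiveness ($\sum_i \alpha_i = 1$) guarantees no cost leaks out of the allocation, and the proportionality guarantees the clean cancellation; together they deliver the identity. The substantive payoff, which I would flag but need not prove here, is that $\tilde{Y}$ collapses to the same cost-based quantity $\hat{Y} = rK + wL$ studied in Section \ref{sec:third-best}, so Paradoxes \ref{theo:paradox1}--\ref{theo:paradox4} carry over verbatim to the cost-weighted setting.
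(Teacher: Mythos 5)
Your proof is correct and follows essentially the same route as the paper: identify $c_i = \alpha_i(rK+wL)/y_i$ from Assumption \ref{theo:assump1}, substitute into Equation \eqref{eq:second-best}, cancel the $y_i$ terms, and apply the normalization $\sum_i \alpha_i = 1$. Your added discussion of why the assumption pins down $c_i$ is a reasonable elaboration of what the paper states more tersely, but the argument is identical in substance.
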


\begin{proof}
    Under Assumption \ref{theo:assump1}, the unit cost $c_i$ for each output $y_i$ is defined as:
    \begin{equation}\label{eq:unit-cost}
        c_i = \alpha_i (rK + wL)/y_i, \quad \sum_i \alpha_i = 1,
    \end{equation}
Substituting this definition into the cost-weighted output index from Equation \eqref{eq:second-best} yields:
    \begin{equation}\label{eq:third-equiv-second}
        \tilde{Y} = \sum_i c_i y_i = \sum_i \alpha_i (rK + wL) = rK + wL.
    \end{equation}
\end{proof}

Proposition \ref{theo:propo1} reveals that, despite having information on output quantities, the resulting aggregate output construction remains entirely input-driven. The measured TFP under this construction is identical to the case where neither output prices nor quantities are observed. As a result, the biases documented in Paradoxes  \ref{theo:paradox1}--\ref{theo:paradox4} fully carry over to this setting. 

The result in Proposition \ref{theo:propo1} depends on the assumption of exhaustive cost allocation. In practice, for institutions such as hospitals or schools, this is rarely the case. There are often unmeasured activities, and a significant portion of costs (e.g., for general administration and management) can be considered ``overhead" that is difficult to allocate to specific outputs. However, even in a more realistic scenario where the cost-weighted index covers only a part of the total cost, our central critique remains valid. 

It is also useful to note the level of aggregation at which these different measures are applied. The SNA-based value added discussed in Section 4 is most commonly used in industry-level productivity analysis. In contrast, the cost share-weighted sum of output indicators, as analyzed in this section, is more standard at more disaggregated levels, such as for individual establishments (e.g., hospitals or schools).

%%%%%%%%%%%%%%%%%%%%%%%%%%%%%%%%%%%%%%%%%%%%%%%%
%%
%%
%%
%%%%%%%%%%%%%%%%%%%%%%%%%%%%%%%%%%%%%%%%%%%%%%%%
\section{TFP measurement with distorted market prices}\label{sec:second-best}
In some public services, both output quantities and prices are observable, but the prices are heavily regulated or distorted. Examples include public transportation systems with fixed ticket prices, public utility providers with administratively set tariffs, and national postal services with regulated stamp prices. In this setting, the most direct monetary measure for aggregate output is total revenue. The corresponding TFP level, $\hat{A}$, is therefore measured as:
\begin{equation}\label{eq:TFP-partial-info-2}
\hat{A} = \frac{R}{f(K, L, M)}.
\end{equation}
where $R$ is the total revenue, $M$ represents intermediate inputs, and $f$ is now the gross output production function. This leads to the following paradox:

\begin{paradox}
Under constant technology, real input prices, and input and output quantities, a decrease in regulated output prices reduces the TFP measured according to Equation \eqref{eq:TFP-partial-info-2}.
\end{paradox}

\begin{proof}
Consider a public service provider where the price for each unit of service is set as a markup over marginal cost:
\begin{equation}
p_i = (1+\mu_i) MC_i,
\end{equation}
where $\mu_i$ represents the markup and $MC_i$ is the real marginal cost for output $i$. The provider's total revenue is then given by:
\begin{equation}
R = \sum_i p_i y_i = \sum_i (1+\mu_i) MC_i y_i.
\end{equation}

Suppose a regulatory intervention reduces the markup $\mu_i' < \mu_i$ for all $i$ (e.g., through tighter price controls). Since it is assumed that technology and real input prices are constant, the marginal cost $MC_i$ for each output $i$ remains unchanged. With output quantities also held constant, the measured TFP in Equation \eqref{eq:TFP-partial-info-2} becomes:
\begin{equation}
\hat{A}' = \frac{\sum_i (1+\mu_i') MC_i y_i}{f(K, L, M)} < \frac{\sum_i (1+\mu_i) MC_i y_i}{f(K, L, M)} = \hat{A}.
\end{equation}
\end{proof}

In sum, the analysis of distorted prices reveals that fundamental flaws in standard TFP measurement persist even when moving from cost-based measures of aggregate output to monetary output aggregates. The resulting TFP measure incorrectly attributes regulatory changes to productivity changes, leading to misleading conclusions about public sector performance.

It is worth noting that, while value added is often the final measure of interest, the analytical advantage of using total revenue in this paper lies in its direct functional relationship with the distorted prices. This allows us to most clearly isolate the paradox that arises from price regulation. Consistent with using total revenue (i.e., gross output) in the numerator, the denominator of the TFP formula \eqref{eq:TFP-partial-info-2} features a gross output production function that includes intermediate inputs in addition to capital and labor. The proper identification of such gross output production functions is a topic of ongoing research (see, e.g., \cite{gandhi_identification_2020}). While our simplified framework effectively illustrates the paradox arising from price distortions, the choice between the value added and gross output approaches for public sector productivity analysis certainly warrants further research.

%%%%%%%%%%%%%%%%%%%%%%%%%%%%%%%%%%%%%%%%%%%%%%%%
%%
%%
%%
%%%%%%%%%%%%%%%%%%%%%%%%%%%%%%%%%%%%%%%%%%%%%%%%
\section{Concluding remarks}\label{sec:concl}

This paper has shown that standard approaches to TFP measurement can yield systematically biased results when applied in the absence of a reliable measure of aggregate output. In common settings in the public sector where value added is defined based on input aggregates or the aggregate output is constructed using cost-based weights, measured TFP may paradoxically decline as a result of genuine productivity-enhancing changes, such as technical progress, improved allocative efficiency, or improved scale efficiency. It may also simply respond to changes in real input prices without reflecting any underlying change in productivity. Furthermore, even when monetary revenue aggregates are available but based on distorted prices, the TFP measure remains biased, as it incorrectly attributes regulatory output price changes to productivity changes.

These findings demonstrate the importance of revisiting how productivity is assessed in the public sector. While perfect measures may not exist, it is crucial to recognize that not all productivity measures are equally flawed, and indeed, bad measures can create the wrong kinds of incentives. Policymakers and statistical agencies should exercise extreme caution when interpreting TFP indicators based on cost-based measures of aggregate output or monetary output aggregates, particularly in settings where market-based valuation is infeasible or distorted. If such proxies are used instead of meaningful output valuation, productivity indices may mislead decision makers, generating counterproductive incentives that distort resource allocation. For example, these measures may inadvertently incentivize organizations to raise wages and capital compensations, thereby inflating ``measured productivity" without genuine service improvement. Moreover, if outputs are weighted by their associated unit costs (i.e., Marxist labor theory of value), organizations may prioritize producing the most expensive services possible, even if simpler, more beneficial alternatives exist. This highlights the critical importance of selecting appropriate output valuation methods in public services to avoid unintentionally encouraging inefficiency or unnecessary expenditure, and ultimately to ensure that incentives align with the public good.

A central implication is that public sector productivity analysis must shift away from reliance on cost-based proxies, and instead move toward independent and economically grounded valuations of output. When market prices are unavailable or unreliable, non-market valuation methods, including stated preferences and revealed preferences techniques (e.g., \cite{Kuosmanen2021,chay_does_2005,hainmueller_consumer_2015}), offer promising tools for valuing public service outputs. These methods are already well established in fields such as environmental economics, which similarly deals with valuing services that lack clear market prices. A more systematic application of these methods to productivity measurement in health, education, and other public services represents a fascinating direction for future research.

Finally, the measurement paradoxes identified in this paper are not merely historical artifacts of an outdated accounting framework. The recently adopted 2025 System of National Accounts continues to rely on the same cost-based valuation for non-market output \parencite{united_nations_system_2025}. This underscores the persistent and forward-looking relevance of our findings, and it highlights the urgent need to develop and implement the alternative, economically-grounded valuation methods advocated in this paper.

%%%%%%%%%%%%%%%%%%%%%%%%%%%%%%%%%%%%%%%%%%%%%%%%
%%
%%
%%
%%%%%%%%%%%%%%%%%%%%%%%%%%%%%%%%%%%%%%%%%%%%%%%%
% \section*{Acknowledgments}\label{sec:ack}

%%%%%%%%%%%%%%%%%%%%%%%%%%%%%%%%%%%%%%%%%%%%%%%%
%%
%%
%%
%%%%%%%%%%%%%%%%%%%%%%%%%%%%%%%%%%%%%%%%%%%%%%%%

% \clearpage
% \newpage
%% add the bibliography here %%
\printbibliography
\baselineskip 12pt

%%%%%%%%%%%%%%%%%%%%%%%%%%%%%%%%%%%%%%%%%%%%%%%%
%%
%%
%%
%%%%%%%%%%%%%%%%%%%%%%%%%%%%%%%%%%%%%%%%%%%%%%%

% % \clearpage
% % \newpage
% \begin{appendices}
% \baselineskip 20pt
% \renewcommand{\thesection}{A.\arabic{section}}
% \setcounter{table}{0}
% \setcounter{equation}{0}
% \setcounter{theorem}{0}
% \renewcommand{\theequation}{A\arabic{equation}} 
% \renewcommand{\thetable}{A\arabic{table}} 

% %%%%%%%%%%%%%%%%%%%%%%%%%%%%%%%%%%%%%%%%%%%%%%%%
% %%
% %%
% %%
% %%%%%%%%%%%%%%%%%%%%%%%%%%%%%%%%%%%%%%%%%%%%%%%
% \section{Additional tables}\label{sec:add-tables}

% \end{appendices}

\end{document}